\documentclass[conference]{IEEEtran}
\usepackage{amsfonts, amsmath, amsthm, tikz}
\usepackage[hidelinks]{hyperref}

\newtheorem{prop}{Proposition}
\newtheorem{thm}{Theorem}

\theoremstyle{definition}
\newtheorem{defn}{Definition}

\theoremstyle{remark}
\newtheorem{rem}{Remark}
\newtheorem{exa}{Example}

\newcommand{\F}{\mathbb F}
\newcommand{\mcC}{\mathcal C}

\newcommand{\smsum}{\textstyle\sum\limits}

\newcommand{\cir}[1]{\draw (#1) circle (.45);}
\newcommand{\fil}[1]%
{\fill[gray!50!white] (#1) circle (.45); \cir{#1};}
\newcommand{\rec}[2]%
{\draw[semithick] (#1-.45, #2-.45) rectangle (#1+.45, #2+.45);}
\newcommand{\fns}{\footnotesize}

\DeclareMathOperator{\Mat}{Mat}

\begin{document}

\title{Pseudoredundancy for the Bit-Flipping Algorithm}

\author{\IEEEauthorblockN{Jens Zumbrägel}
  \IEEEauthorblockA{%
    \textit{Faculty of Computer Science and Mathematics} \\
    \textit{University of Passau} \\
    Innstraße 33, 94032 Passau, Germany \\
    jens.zumbraegel@uni-passau.de}}

\maketitle

\begin{abstract}
  The analysis of the decoding failure rate of the bit-flipping
  algorithm has received increasing attention.  For a binary linear
  code we consider the minimum number of rows in a parity-check matrix
  such that the bit-flipping algorithm is able to correct errors up to
  the minimum distance without any decoding failures.  We initiate a
  study of this bit-flipping redundancy, which is akin to the stopping
  set, trapping set or pseudocodeword redundancy of binary linear
  codes, and focus in particular on codes based on finite geometries.
\end{abstract}

\section{Introduction}

The bit-flipping algorithm of Gallager~\cite{Gal63} is a simple but
effective iterative decoding method.  It has received increased
attention recently due to its usage in the post-quantum cryptography
scheme BIKE~\cite{BIKE}.  The idea of this cryptosystem is based on
decoding a moderate-density parity-check code for the legitimate
party, while hiding the sparse parity-check structure for the
attacker~\cite{M+13}.  A significant issue for this scheme is however
the analysis of the decoding failure rate.  Indeed, the asymptotic
analysis cannot be applied to concrete codes due to the presence of
cycles in the Tanner graph.

In the context of finite-length analysis of iterative decoding schemes
failure patterns have been studied such as stopping sets~\cite{D+02},
trapping sets~\cite{Ric03}, pseudocodewords~\cite{VK05} and absorbing
sets~\cite{Dol10}.  In this regard a \emph{pseudoredundancy} has been
considered as the minimum number of parity-checks that avoid these
failure patterns up to a certain extent.  In particular, the stopping
set redundancy~\cite{SV06}, the trapping set redundancy~\cite{L+09}
and the pseudocodeword redundancy for several channels~\cite{KS07,
  ZSF12} have been investigated.

In this work we propose a notion of bit-flipping pseudoredundancy of
a binary linear code as the minimum number of rows in a parity-check
matrix such that the bit-flipping algorithm is able to correct errors
up to the minimum distance without any decoding failures.  While the
concept seems to be very similar to the stopping set or pseudocodeword
redundancy, one significant difference is that the analysis is more
dependent on the structure of the columns rather than the rows of
a parity-check matrix.  Relevant for this work is also the notion
of expander codes~\cite{SS96}, as this allows to make rigorous
statements about the bit-flipping algorithm.

We initiate the study of the bit-flipping redundancy and present some
first results.  After stating the preliminaries in the next section,
including the definition of bit-flipping pseudoredundancy, we focus
in particular on codes based on finite geometries and an analysis
of $t$-error correction for small~$t$.

\section{Preliminaries}

\subsection{Parity-Check Codes}

We consider a binary linear code $\mcC \subseteq \F_2^n$ of length~$n$
given by a \emph{parity-check matrix} $H = (h_{ji}) \in
\Mat_{r \times n}(\F_2)$, i.e.\
\[ \mcC = \big\{ x \in \F_2^n \mid \smsum_{i=1}^n h_{ji} x_i = 0
  \text{ for all } j \big\} . \]
The matrix~$H$ is not required to be full-rank, so $r \ge n - k$
where $k = \dim \mcC$ is the code dimension.

The \emph{Tanner graph} associated to~$H$ is the bipartite graph with
variable nodes $v_1, \dots, v_n$ and check nodes $w_1, \dots, w_r$,
with~$v_i$ adjacent to~$w_j$ precisely when $h_{ji} = 1$.  We assume
this graph to be left-regular, so that each variable node has same
degree~$c$.  This means that each column $h_i$ of the matrix~$H$ has
constant weight~$c$.

\subsection{Bit-Flipping Algorithm}

The \emph{bit-flipping algorithm} is a simple iterative decoding method.
Let~$\mcC$ be a binary code of length~$n$ with parity-check matrix~$H$.
Given a received vector $y = x + e \in \F_2^n$ (where $x \in \mcC$ is
a codeword and $e \in \F_2^n$ is the error vector) with syndrome
$s = H y^T \in \F_2^r$ do:
\begin{enumerate}
\item for each~$i$ compute the number~$u_i$ of unsatisfied parity-check
  neighbors of variable node~$v_i$; if all $u_i \le \frac c 2$ stop,
  return~$e$
\item for those~$i$ with largest~$u_i$ perform a bit flip $e_i' = e_i
  + 1$ and update the syndrome $s' = s + h_i$, repeat from 1)
\end{enumerate}
Step~2) can be performed either step-by-step or in parallel.  Note
that if $u_i > \frac c 2$ there are more unsatisfied check neighbors
than satisfied ones and therefore the syndrome weight decreases.

There is an alternative, more combinatorial description that will
assist us for the subsequent arguments.  Let us identify vectors in
$\F_2^r$ with subsets of $[r] = \{ 1, \dots, r \}$ in an obvious way.
Consider thus the~$n$ columns of~$H$ as blocks $B_i \subseteq [r]$
each of size~$c$ in a set of size~$r$.  Given now a received vector
$y \in \F_2^n$ with syndrome $S \subseteq [r]$ do:
\begin{enumerate}
\item for each~$i$ compute $u_i = \vert B_i \cap S \vert$; if all
  $u_i \le \frac c 2$ stop
\item for largest~$u_i$ update syndrome $S' = S \,\triangle\, B_i$
  (symmetric difference), repeat from 1)
\end{enumerate}

Now we can define the pseudoredundancy for the bit-flipping algorithm.
It is well-known that a code of minimum distance~$d$ can correct up
to~$\frac {d-1} 2$ errors by maximum-likelihood decoding.

\begin{defn} Let $\mcC \subseteq \F_2^n$ be a binary linear code
  of minimum distance~$d$.  We define the (bit-flipping)
  \emph{pseudoredundancy}~$\rho$ of the code~$\mcC$ as the minimum
  number~$r$ of rows in a parity-check matrix for~$\mcC$ such that
  the corresponding bit-flipping algorithm corrects up to
  $\frac {d-1} 2$ errors. \end{defn}

In this work we deal with left-regular Tanner graphs, thus we require
the parity-check matrix to have constant column weight, although it is
conceivable to relax this condition.  In case there is no parity-check
matrix with this property the pseudoredundancy is understood to
be~$\infty$.  Presently, we do not know a concrete example of a code
with infinite pseudoredundancy.

\subsection{Expander Codes}

The Tanner graph specifies a $(c, d, \alpha)$-\emph{expander} code if
the graph is left-regular of degree~$c$ and any subset of $t \le d$
variable nodes has more than~$\alpha t$ neighboring check nodes.

Sipser and Spielmann~\cite[Thm.~7, Thm.~10]{SS96} have proven the
following results (see also~\cite[Ch.~12]{HLW06}).

\begin{thm}\label{mindist} A $(c, d, \frac c 2)$-expander code has
  minimum distance greater than~$d$. \end{thm}

This result is not hard to show.  Indeed, take a vector of weight
$t \le d$ corresponding to a set of~$t$ variable nodes.  These are
connected by $c t$ edges to more than $\frac c 2 t$ neighboring check
nodes.  So one of those check nodes is connected to only one
of these variable nodes and thus is unsatisfied.

\begin{thm} For a $(c, d, \frac 3 4 c)$-expander code the step-by-step
  bit-flipping algorithm corrects up to $\frac d 2$ errors. \end{thm}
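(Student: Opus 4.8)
The plan is to track the syndrome weight $|S|$ as a potential function and show it strictly decreases with each bit flip, while it cannot reach zero prematurely (i.e.\ before the error is fully corrected). Suppose the true error vector has support $E$ with $|E| = t \le \frac d 2$. At any stage of the algorithm the current "residual error" is some set $E' \subseteq [n]$, and the syndrome is $S = \bigtriangleup_{i \in E'} B_i$, i.e.\ the set of check nodes adjacent to an odd number of residual-error variables. I would first establish the key inequality: for the residual error set $E'$ of size $t' = |E'|$ (with $0 < t' \le 2t \le d$), there is some variable node $v_i$ with $u_i = |B_i \cap S| > \frac c 2$, so the algorithm makes progress, and moreover the bit it flips lies in $E'$ — i.e.\ it flips a genuinely wrong coordinate rather than corrupting a correct one.

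The main tool is the expansion hypothesis $\alpha = \frac 3 4 c$ applied to the residual error set $E'$ (legitimate since $t' \le d$). The set of neighbors $N(E')$ satisfies $|N(E')| > \frac 3 4 c t'$. Counting edges from $E'$: there are $c t'$ of them. Checks in $N(E')$ adjacent to exactly one element of $E'$ ("unique neighbors") are unsatisfied; if $\sigma$ denotes the number of such unique-neighbor checks, a standard double-counting argument gives $\sigma \ge 2|N(E')| - c t' > 2 \cdot \frac 3 4 c t' - c t' = \frac c 2 t'$. So there are more than $\frac c 2 t'$ unsatisfied checks that each touch $E'$ exactly once, hence on average a variable in $E'$ sees more than $\frac c 2$ such checks, so some $v_i$ with $i \in E'$ has $u_i > \frac c 2$. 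Since step 2) flips a variable of \emph{largest} $u_i$, that chosen variable also has $u_i > \frac c 2$; I then need the separate (easier) observation that \emph{any} variable $v_j$ with $u_j > \frac c 2$ must in fact lie in $E'$ — otherwise flipping it would enlarge the residual error to $E' \cup \{j\}$ of size $t'+1 \le d$, and by the same unique-neighbor count applied to $E' \cup \{j\}$ one shows $v_j$ would have $u_j < \frac c 2$ with respect to the current syndrome, a contradiction. Hence every flip the algorithm performs is correct, $|E'|$ decreases by one, and after at most $t$ steps we reach $E' = \emptyset$, $S = \emptyset$, all $u_i = 0 \le \frac c 2$, and the algorithm halts with the correct error.

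The step I expect to be the main obstacle is pinning down precisely why a variable node \emph{outside} the residual error cannot be the one with largest $u_i$ at a stage where the residual error is nonempty. The clean statement is: if $0 < |E'| \le d - 1$ then for every $j \notin E'$ we have $u_j \le \frac c 2$ (so it is never selected over an in-error variable, and in particular never flipped), whereas some $i \in E'$ has $u_i > \frac c 2$. Both halves follow from the unique-neighbor bound above — the first by applying it to $E' \cup \{j\}$ and isolating the contribution of $v_j$ — but care is needed with the strict versus non-strict inequalities coming from "more than $\alpha t$" and from the parity of $c$, and with the edge case where flipping a correct bit would create a residual set of size exactly $d$. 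I would handle this by noting that even at size $d$ the expansion hypothesis still applies (the code has minimum distance $> d$ by Theorem~\ref{mindist}, so in fact no residual error of size $\le d$ has empty syndrome), which keeps the argument from stalling. Once this "no bad flips, always a good flip available" invariant is in place, termination and correctness are immediate from the monotone decrease of $|E'|$.
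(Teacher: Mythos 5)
There is a genuine gap, and it sits exactly where you predicted: the claim that every variable $v_j$ with $u_j > \frac c 2$ must lie in the residual error set $E'$ is false at this expansion level, and your proposed derivation of it does not close. Applying the unique-neighbor count to $E' \cup \{j\}$ only bounds the number of \emph{non-unique} neighbors of $E' \cup \{j\}$ by roughly $\frac c 4 (t'+1)$, which does not confine the single quantity $u_j$ to $\le \frac c 2$. Concretely, with $c = 4$ the condition $\alpha = \frac 3 4 c = 3$ forces any two variables to share at most one check, but it permits a non-erroneous $v_j$ to share one check with each of three or four pairwise disjoint erroneous variables (e.g.\ four errors and $v_j$ give $20 - 4 = 16 > 15$ neighbors, so expansion is respected); then $u_j \in \{3,4\} > \frac c 2 = 2$, and in the four-error case $u_j$ ties the maximum, so a step-by-step run may legitimately flip the correct bit $v_j$. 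The invariant ``no bad flips'' on which your argument rests therefore does not hold, and the paper explicitly warns that ``the algorithm may decide to flip a different variable that is not erroneous.''

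The paper's proof (following Sipser--Spielman) accepts bad flips and instead uses the potential function you named in your opening sentence, but with a different invariant: each flip strictly decreases the syndrome weight $u$; the initial value satisfies $u \le c t \le \frac c 2 d$; and if the residual error ever grew to size $d$, the expansion bound would force $u > \frac c 2 d$ at that moment --- a contradiction. Hence the residual error stays strictly below $d$ throughout, the guarantee that some variable has $u_i > \frac c 2$ never lapses, the algorithm halts only with zero syndrome, and Theorem~\ref{mindist} then forces the residual error to be zero. Your first half (existence of some $i \in E'$ with $u_i > \frac c 2$, via unique neighbors) is correct and matches the paper; replace the ``no bad flips'' lemma with this syndrome-weight argument and the proof is complete.
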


The proof idea is the following.  Consider an error vector~$e$ of
weight~$t$ given by a set of~$t$ variable nodes connected to~$u$
unsatisfied check nodes.  Suppose that $t \le d$ and there are~$s$
satisfied neighbors.  Then $u + s > \frac 3 4 c t$ and $c t \ge
u + 2 s$, whence $u = 2 (u + s) - (u + 2 s) > \frac c 2 t$.  Since 
these~$t$ variable nodes have $c t$ edges, there is hence one with
more than $\frac c 2$ unsatisfied neighbors, in which case we could
flip the corresponding variable.  However, the algorithm may decide to
flip a different variable that is not erroneous.  Still the method
only fails to decode if the error weight increases to $t = d$ at a
later stage, in which case $u > \frac c 2 d$.  But if we have
$t \le \frac d 2$ errors at the start, this cannot occur since
$u \le \frac c 2 d$ is decreasing.

\section{Codes Based on Finite Geometries}

A class of expander codes can be obtained from finite geometries.
Consider a Tanner graph with variable nodes $v_1, \dots, v_n$ and
check nodes $w_1, \dots, w_r$.  As before, for a variable node~$v_i$
let $B_i \subseteq [r]$ denote the indices~$j$ of neighboring
check nodes~$w_j$, which is a set of size~$c$.

We may view the~$B_i$ as blocks or \emph{lines} in a point set of
size~$r$.  These are said to form a \emph{partial geometry} if
distinct lines intersect in at most one point, or equivalently, two
points lie on at most one line.  For the parity-check matrix this
means that there is no rectangle of $1$'s, and for the Tanner graph to
contain no four-cycle.

\begin{exa} Let $q = p^m$ be a prime power and consider a projective
  plane of order~$q$, which has $n = q^2 + q + 1$ points.  There
  are~$n$ lines, each having $c = q + 1$ points, and any two lines
  intersect in one point. \end{exa}

Note that~$t$ distinct lines in a partial geometry have at most
${t \choose 2}$ intersection points, so their union has at least
$c t - {t \choose 2}$ elements.  It follows that these define
$(c, t, \alpha)$-expander codes where $\alpha = c - \frac{t-1} 2$.

So $\alpha > \frac c 2 t$ if and only if $t - 1 < c$, and
$\alpha > \frac 3 4 c t$ if and only if $t - 1 < \frac c 2$.
Hence we can take maximum $d = c$ for the minimum distance result and
maximum $d = \lceil \frac c 2 \rceil$ for the error-correction
(leading to correcting up to $\frac {c+1} 4$ errors).  We can improve
the error-correction capability as follows.

\begin{prop}\label{partial} For a code based on a partial geometry of
  constant block size~$c$ the bit-flipping algorithm corrects up to
  $\frac c 2$ errors. \end{prop}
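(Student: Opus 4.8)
The plan is to work with the combinatorial description of the algorithm (blocks $B_1,\dots,B_n\subseteq[r]$ with $|B_i\cap B_j|\le 1$ for $i\neq j$) and to exploit the partial‑geometry condition directly, rather than routing through the expansion bound of Theorem~2, which as noted only yields $\approx c/4$. First I would recall that such a code has minimum distance $d>c$: for any $t\le c$ variable nodes the union bound gives at least $ct-\binom t2>\tfrac c2 t$ neighbours, so the code is a $(c,c,\tfrac c2)$‑expander and Theorem~\ref{mindist} applies; in particular $\lfloor c/2\rfloor\le\frac{d-1}2$, so correcting $\tfrac c2$ errors is within the range considered by the pseudoredundancy, and meets it exactly when $d=c+1$ (e.g.\ for a projective plane).

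The core of the proof is an invariant maintained throughout the run. Suppose the current residual error has support $E\subseteq[n]$ with $t:=|E|\le c/2$ (an integer, so $t\le\lfloor c/2\rfloor$), so that the current syndrome is $S=\triangle_{i\in E}B_i$. I claim that every $\ell\in E$ has $u_\ell=|B_\ell\cap S|\ge c-t+1$, while every $\ell\notin E$ has $u_\ell\le t$. For the second claim, a point of $B_\ell\cap S$ lies in $\bigcup_{i\in E}B_i$, hence on some $B_i$ with $i\in E$, and each such $B_i$ meets $B_\ell$ in at most one point (here $i\neq\ell$), so there are at most $t$ of them. For the first claim, call a point $p\in B_\ell$ \emph{bad} if it lies on some $B_i$ with $i\in E\setminus\{\ell\}$; each such $B_i$ contributes at most one bad point, so there are at most $t-1$ bad points. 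Any non‑bad point of $B_\ell$ lies on exactly one error line, namely $B_\ell$, hence lies in $S$; therefore $u_\ell\ge c-(t-1)$. Since $t\le c/2$, this gives $u_\ell\ge c-t+1>\tfrac c2\ge t$ for $\ell\in E$, so the separation $u_\ell>\tfrac c2\ge u_{\ell'}$ holds for all $\ell\in E$, $\ell'\notin E$.

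Consequently, as long as $E\neq\emptyset$ the maximum of the $u_i$ exceeds $\tfrac c2$ and is attained only at indices in $E$; hence the algorithm does not stop, and step~2 — whether executed step‑by‑step or in parallel — flips only erroneous positions. Thus $E$ strictly shrinks while remaining of size $\le\lfloor c/2\rfloor$, and after at most $t$ iterations $E=\emptyset$, $S=0$, and the algorithm halts having recovered $e$. I expect the only subtle point to be the parity bookkeeping in the first claim: a point covered by several error lines need not belong to $S$, which is exactly why one bounds the at most $t-1$ bad points instead of trying to control $S$ there; everything else is an immediate consequence of $|B_i\cap B_j|\le 1$.
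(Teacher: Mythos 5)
Your proof is correct and follows essentially the same route as the paper: you show that for a residual error set of size $t\le c/2$ every erroneous block meets the syndrome in at least $c-t+1$ points while every other block meets it in at most $t$ points, and conclude from $c-t+1>t$ that only erroneous positions are flipped. The only difference is that you spell out the intersection counts and the non-premature-termination check that the paper's proof leaves implicit.
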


\begin{proof} We argue that the (step-by-step or parallel) bit-flipping
algorithm never flips a non-erroneous variable node as long as $t \le
\frac c 2$.  Suppose that, say, the first~$t$ variable nodes are
erroneous, and consider the syndrome set \[ S = B_1 \,\triangle\, \dots
\,\triangle\, B_t \] composed out of the~$t$ blocks~$B_i$.  Then every
block~$B_i$ intersects~$S$ in at least $c - t + 1$ points, while
some other block intersects~$S$ in at most~$t$ variables.  Since
$2 t \le c$ we have $c - t + 1 > t$ and the algorithm chooses to flip
one of the error bits. \end{proof}

We note that the bit-flipping algorithm has been studied for
finite-geometry codes by Kou, Lin and Fossorier~\cite[Sec.~IV-A]%
{KLF01}, however their focus was less on a rigorous analysis.

\begin{exa} For $q = 2^m$ the parity-check code corresponding to the
  projective plane on $n = 4^m + 2^m + 1$ points with block size
  $c = 2^m + 1$ has dimension $4^m - 3^m + 2^m$ and minimum distance
  $2^m + 2$, cf.\ \cite[Sec.~13.8]{MS77}.  So Prop.~\ref{partial}
  shows that the pseudoredundancy of this code is $\rho \le n$.

  Similarly, the punctured Euclidean plane constitutes a finite
  geometry on $n = 4^m - 1$ points and~$n$ lines each having $c = 2^m$
  points.  This parity-check code has dimension $4^m - 3^m$ and
  minimum distance $2^m + 1$, cf.\ \cite[Sec.~III-A]{KLF01}, hence
  the pseudoredundancy again is $\rho \le n$ by Prop.~\ref{partial}.

  Concretely, for $q = 4$ we obtain binary linear $[15, 7, 5]$ and
  $[21, 11, 6]$ codes with finite pseudoredundancy. \end{exa}

Prop.~\ref{partial} can easily be generalized for blocks of size~$c$
such that distinct blocks intersect in at most~$s$ points, in which
case the bit-flipping algorithm corrects up to $t \le \frac c {2 s}$
errors (in fact, any~$t$ with $(2 t - 1) s < c$ suffices).

\begin{rem} A parity-check matrix as above is the incidence matrix of
  a “partial $(c, s)$ design” in the sense of \cite[Def.~7.1]{ZSF12}.
  It has been shown that the various minimum pseudoweights satisfy
  $w \ge 1 + \frac c s$ (cf.\ \cite[Thm.~7.3]{ZSF12}), so that
  $w \ge 1 + c$ in the case of a partial geometry. 

  Therefore, these pseudoweight bounds correspond to the
  error-correcting capability of the bit-flipping algorithm. \end{rem}

\subsection{The Hamming and Simplex Codes}

We now provide some results which correspond to those
in~\cite[Sec.~7]{ZSF12}.

A binary linear code has minimum distance at least~$3$ if and only if
the columns of a parity-check matrix are distinct.  In such a case
it is easy to see that the bit-flipping algorithm corrects one error.
We obtain the following.

\begin{prop} The $[n, n - m, 3]$ Hamming code where $n = 2^m - 1$
  has pseudoredundancy $\rho \le n$. \end{prop}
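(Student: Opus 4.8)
The plan is to exhibit a parity-check matrix with exactly $n = 2^m - 1$ rows for which the bit-flipping algorithm corrects one error, and then invoke the definition of pseudoredundancy. The natural candidate is the matrix $H$ whose $n$ columns are precisely all $n$ nonzero vectors of $\F_2^m$, extended or arranged so that the column weight is constant; but since the definition here insists on a left-regular Tanner graph, the first thing I would do is find an $n \times n$ parity-check matrix for the Hamming code with constant column weight. The standard $m \times n$ Hamming parity-check matrix is not column-regular, so I would instead take the $n \times n$ matrix $H$ whose rows are indexed by the nonzero vectors of $\F_2^m$ and whose $(j,i)$ entry records incidence in a suitable design — concretely, one can take the cyclic parity-check matrix of the Hamming code, an $n \times n$ circulant whose first row is a weight-$c$ vector, so that every column (and every row) has the same weight $c$. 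This circulant has rank $m$ over $\F_2$, hence is indeed a (redundant) parity-check matrix for the $[n, n-m, 3]$ code, and it has exactly $n$ rows.

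The second step is to verify the bit-flipping claim. Since the Hamming code has minimum distance $3$, we need the algorithm to correct up to $\lfloor (3-1)/2 \rfloor = 1$ error. As the excerpt already observes, a binary linear code has minimum distance at least $3$ exactly when the columns of a parity-check matrix are distinct, and "in such a case it is easy to see that the bit-flipping algorithm corrects one error" — so I would spell this out for our circulant $H$. With a single error in position $i$, the syndrome set is $S = B_i$, the $i$-th block of size $c$. For this column we get $u_i = |B_i \cap S| = |B_i| = c > c/2$, so the algorithm does flip some variable in step 2). To guarantee it flips the right one, I would check that no other column $B_\ell$ satisfies $u_\ell = |B_\ell \cap B_i| \ge u_i = c$: this would force $B_\ell = B_i$, contradicting distinctness of the columns. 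Hence the unique largest $u_\ell$ is at $\ell = i$, the error bit is flipped, the syndrome becomes empty, and the algorithm halts correctly. (If step 2) is run in parallel the same argument applies, since only position $i$ attains the maximum.)

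Finally, combining these two steps: we have produced a constant-column-weight parity-check matrix with $r = n$ rows whose bit-flipping algorithm corrects up to $\frac{d-1}{2} = 1$ errors, so by definition $\rho \le n$. The main obstacle I anticipate is the bookkeeping in the first step — confirming that a column-regular $n \times n$ parity-check matrix for the Hamming code genuinely exists (the circulant construction, or alternatively simply duplicating rows of the $m \times n$ matrix until a regular configuration is reached, though one must then check the row count is still at most $n$ and that distinctness of columns is preserved). Once a concrete such $H$ is in hand with distinct columns, the decoding argument is immediate from the remark quoted above.
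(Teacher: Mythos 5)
Your proposal is correct and follows the same route as the paper: the paper's proof consists precisely of taking the circulant parity-check matrix of the (cyclic) Hamming code, which has $n$ rows and constant column weight, and then appealing to the observation made just before the proposition that distinct columns suffice for the bit-flipping algorithm to correct one error. You merely spell out the details that the paper leaves implicit (the uniqueness of the maximizing $u_i$ via $|B_\ell \cap B_i| = c \Rightarrow B_\ell = B_i$), which is a sound elaboration.
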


\begin{proof} Take a circulant parity-check matrix for the Hamming
  code, which has~$n$ rows and constant column weight. \end{proof}

\begin{prop} The $[n, m, 2^{m-1}]$ simplex code where $n = 2^m - 1$
  has pseudoredundancy $\rho \le \frac 1 6 n (n - 1)$. \end{prop}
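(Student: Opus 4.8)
The plan is to build an explicit parity-check matrix whose rows are the characteristic vectors of the two-dimensional subspaces of $\F_2^m$, check that it meets the requirements of the definition, and then apply Proposition~\ref{partial}.

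First I would fix coordinates: identify the $n = 2^m - 1$ positions of the simplex code with the nonzero vectors of $\F_2^m$, so that $\mcC^\perp$ is the $[n, n-m, 3]$ Hamming code having the nonzero vectors of $\F_2^m$ as the columns of its standard parity-check matrix. A weight-$3$ word of the Hamming code is then supported on a triple $\{a, b, c\}$ of nonzero vectors with $a + b + c = 0$, i.e.\ on the three nonzero vectors of a two-dimensional subspace (a line of $PG(m-1,2)$). Let $H$ be the matrix whose rows are the supports of all such subspaces.

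Next I would verify that $H$ is admissible in the sense of the definition. First, it is a parity-check matrix for $\mcC$: its rows lie in $\mcC^\perp$, so it is enough that the weight-$3$ codewords span the Hamming code, and this follows by induction on the weight. Given a codeword $z \in \mcC^\perp$ of weight at least $4$, pick two positions $a, b$ in its support and let $c$ be determined by $a + b + c = 0$; adding the weight-$3$ codeword supported on $\{a, b, c\}$ yields a Hamming codeword of strictly smaller weight (its support loses $a$ and $b$ and either gains or loses $c$), so induction applies. Second, the column weight is constant: each nonzero vector lies on exactly $(n-1)/2 = 2^{m-1} - 1$ of these subspaces, so every column of $H$ has weight $c = 2^{m-1} - 1$ and the Tanner graph is left-regular. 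Third, the number of rows is the number of two-dimensional subspaces of $\F_2^m$, namely $\tfrac{(2^m-1)(2^m-2)}{(2^2-1)(2^2-2)} = \tfrac16 n(n-1)$.

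Finally, two distinct two-dimensional subspaces meet in a subspace of dimension at most $1$, hence share at most one nonzero vector, so the rows of $H$ form a partial geometry with constant block size $c = 2^{m-1} - 1$. By Proposition~\ref{partial} the bit-flipping algorithm for $H$ corrects up to $\tfrac c2$ errors; since $c = 2^{m-1} - 1 = d - 1$, this is precisely up to $\tfrac{d-1}2$ errors, as the definition of pseudoredundancy demands. Hence $\rho \le \tfrac16 n(n-1)$. The only step carrying real content is the spanning claim; the rest is counting, and the weight-reduction induction disposes of the spanning claim cleanly.
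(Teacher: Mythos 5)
Your proof is correct and takes essentially the same route as the paper: the rows of the parity-check matrix are all weight-$3$ codewords of the dual Hamming code (equivalently, the lines of $PG(m-1,2)$), which form a partial geometry with constant block size $c = \frac{n-1}{2}$ and number $\frac16 n(n-1)$, so Prop.~\ref{partial} gives correction of $\frac{c}{2} = \frac{d-1}{2}$ errors. Your explicit check that the weight-$3$ codewords span the Hamming code, which the paper leaves implicit, is a welcome addition.
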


\begin{proof} Consider the Hamming code dual of the simplex code and
  take as rows of the parity-check matrix all codewords of weight~$3$.
  This matrix has $r = \frac 1 6 n (n - 1)$ rows, since there are
  $\frac 1 2 n (n - 1)$ vectors of weight~$2$ and each row covers
  three of these.  The blocks $B_1, \dots, B_n$ corresponding to the
  columns have size $c = \frac {n-1} 2$ and form a partial geometry.
  Indeed, any two-element set $\{ i, j \}$ is covered by (at
  most) one Hamming codeword of weight~$3$, hence the intersection
  $B_i \cap B_j$ has at most one element.  So by applying
  Prop.~\ref{partial} we can correct up to $\frac {n-1} 4 =
  \frac {d-1} 2$ errors. \end{proof}

We note that taking a circulant parity-check matrix, as in the
previous proof, would not suffice in this case.

\subsection{Eigenvalue Analysis}

The notion of expander graph is closely related to the \emph{spectral
  gap} of the largest and second-largest eigenvalue of the incidence
matrix.  In this regard the following eigenvalue bound by
Tanner~\cite{Tan01} is of interest.

Consider a binary linear code of length~$n$ with parity-check matrix~$H$
such that the Tanner graph is connected, left-regular of degree~$c$ and
right-regular of degree~$d$.  Denote by~$\lambda_1$ and~$\lambda_2$ the
largest and second-largest eigenvalues, respectively, of the matrix
$H^T H$.

\begin{thm} With the above assumptions the code's minimum distance
  satisfies \[ w \,\ge\, n \!\cdot\! \frac {2 c - \lambda_2}
    {\lambda_1 - \lambda_2} \,. \] \end{thm}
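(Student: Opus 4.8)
The plan is to prove Tanner's eigenvalue bound by a standard Rayleigh-quotient argument applied to the indicator vector of the support of a minimum-weight codeword. First I would set up notation: let $x \in \mcC$ be a nonzero codeword of weight $w$, and let $z \in \F_2^n \subseteq \mathbb R^n$ be its $0/1$ indicator vector, viewed as a real vector with $z^T z = w$. The key structural observation is that since $x$ is a codeword, every check node is satisfied, so each row of $H$ meets the support of $x$ in an \emph{even} number of positions. Because the Tanner graph is left-regular of degree $c$ and right-regular of degree $d$, I would compute $Hz^T$ componentwise: its $j$-th entry is $\vert B_j' \cap \mathrm{supp}(x)\vert$ where $B_j'$ denotes the set of variable nodes on check $j$, and this is an even number between $0$ and $d$. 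In particular each such entry is either $0$ or at least $2$, which forces the bound $(Hz^T)_j^2 \ge 2\,(Hz^T)_j$ entrywise, hence
\[
  z\, H^T H\, z^T \;=\; \lVert H z^T\rVert^2 \;\ge\; 2 \cdot \mathbf 1^T H z^T \;=\; 2 \cdot (H^T\mathbf 1)^T z^T \;=\; 2c\cdot z^T z \;=\; 2cw,
\]
using right-regularity in the penultimate step, since $H^T \mathbf 1 = c\,\mathbf 1$ when every column has weight $c$. Wait — I should double-check: $\mathbf 1^T H z^T = \sum_j (Hz^T)_j = \sum_i c_i z_i$ where $c_i$ is the column weight, which is $c$ for all $i$, giving $c\,\mathbf 1^T z^T = cw$; combined with the factor $2$ this is $2cw$, as claimed.

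Next I would bring in the spectral decomposition of the symmetric positive-semidefinite matrix $M = H^T H$. Since the Tanner graph is connected and $d$-right-regular, the all-ones vector $\mathbf 1$ is an eigenvector of $M$ with eigenvalue $\lambda_1 = cd$ (indeed $M\mathbf 1 = H^T(H\mathbf 1) = H^T(d\,\mathbf 1) = cd\,\mathbf 1$), and this is the Perron eigenvalue, simple because of connectedness. Decompose $z^T = \frac{w}{n}\mathbf 1 + z_\perp$ where $z_\perp \perp \mathbf 1$, noting $\frac{w}{n}$ comes from $\mathbf 1^T z^T = w$ and $\mathbf 1^T\mathbf 1 = n$. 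Then $\lVert z_\perp\rVert^2 = w - \frac{w^2}{n}$, and since every eigenvalue of $M$ on $\mathbf 1^\perp$ is at most $\lambda_2$,
\[
  z M z^T \;=\; \frac{w^2}{n}\,\lambda_1 + z_\perp^T M z_\perp \;\le\; \frac{w^2}{n}\,\lambda_1 + \lambda_2\Bigl(w - \frac{w^2}{n}\Bigr).
\]
Combining this upper bound with the lower bound $zMz^T \ge 2cw$ from the previous paragraph, dividing through by $w > 0$, and rearranging for $w$ yields
\[
  2c \;\le\; \frac{w}{n}\,\lambda_1 + \lambda_2 - \frac{w}{n}\,\lambda_2 \quad\Longrightarrow\quad w \;\ge\; n\cdot\frac{2c - \lambda_2}{\lambda_1 - \lambda_2},
\]
which is exactly the claimed inequality (the statement's display writes $\lambda_1$ for what I called $cd$, consistently with $\lambda_1$ being the largest eigenvalue of $H^TH$).

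I expect the main obstacle to be the careful bookkeeping at the two ends rather than any deep idea: one must be sure that $\lambda_1 = cd$ really is the \emph{largest} eigenvalue (this uses both regularities and the Perron–Frobenius theorem, legitimate since $H^TH$ is entrywise nonnegative and irreducible by connectedness) and that the inequality $\lambda_1 > \lambda_2$ holds strictly so the division is valid — again a consequence of connectedness making $\lambda_1$ simple. A secondary point of care is the passage $(Hz^T)_j^2 \ge 2(Hz^T)_j$: this is where the parity-check (codeword) hypothesis is essential, as it guarantees each entry of $Hz^T$ is even, so the only nonnegative integers in play are $0, 2, 4, \dots$, all of which satisfy $m^2 \ge 2m$. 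Everything else is the textbook expander-mixing style computation, and no finite-geometry or partial-geometry structure is needed here. I would close by remarking that the bound is only useful when $2c > \lambda_2$, i.e.\ when the spectral gap is large enough, which connects it back to the expansion hypotheses used in Theorems~\ref{mindist} and the subsequent one.
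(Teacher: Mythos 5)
Your proof is correct and is precisely the argument the paper alludes to: the paper omits the proof, citing Tanner and noting only that it ``depends crucially on the spectral theorem for the symmetric matrix $H^T H$,'' and your Rayleigh-quotient computation with the even-parity bound $(Hz^T)_j^2 \ge 2(Hz^T)_j$ and the decomposition of $z$ along $\mathbf 1$ and its orthogonal complement is exactly Tanner's bit-oriented bound. The only blemish is the momentary mislabeling of the column-weight identity $H^T\mathbf 1 = c\,\mathbf 1$ as ``right-regularity'' (it is left-regularity), which you correct yourself in the same sentence.
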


Vontobel and Koetter~\cite{VK04} have generalized this bound to the
minimum pseudoweight for the additive white Gaussian noise channel.
The spectral gap has in turn influence on the expansion property, as
proven earlier by Tanner~\cite{Tan84}.

\begin{thm} Any subset of~$t$ variable nodes has at least~$u$
  neighboring check nodes, where \[ u \,\ge\, \frac {c^2 t}
    {(\lambda_1 - \lambda_2) t / n + \lambda_2} \,. \] \end{thm}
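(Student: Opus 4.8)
The plan is to view~$H$ as a $0/1$ matrix over~$\mathbb R$, so that $M := H^{\top}H$ is a symmetric positive-semidefinite $n \times n$ matrix whose $(i,i')$ entry counts the check nodes adjacent to both~$v_i$ and~$v_{i'}$; in particular every diagonal entry of~$M$ equals~$c$. Because the Tanner graph is connected, $M$ is irreducible, so by Perron--Frobenius its largest eigenvalue is simple; since the all-ones vector~$\mathbf 1$ satisfies $M\mathbf 1 = cd\,\mathbf 1$ (using left- and right-regularity), this forces $\lambda_1 = cd$ with~$\mathbf 1$ the associated eigenvector. Now fix a set~$T$ of~$t$ variable nodes, let $N(T)$ be the set of check nodes adjacent to some node of~$T$, and put $u = \lvert N(T)\rvert$. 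For a check node~$w_j$ write $\delta_j$ for the number of its neighbors lying in~$T$; if $\chi_T \in \{0,1\}^n$ denotes the indicator vector of~$T$, then $\delta_j = (H\chi_T)_j$, so that $\sum_j \delta_j = ct$ by left-regularity and $\sum_j \delta_j^2 = \lVert H\chi_T\rVert^2 = \chi_T^{\top} M\, \chi_T$.

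The second step is to bound the quadratic form $\chi_T^{\top}M\chi_T$ spectrally. Write $\chi_T = \tfrac t n\,\mathbf 1 + z$ with $z \perp \mathbf 1$, so that $\lVert z\rVert^2 = t - t^2/n$. Since $\mathbf 1$ is the simple $\lambda_1$-eigenvector and $M$ is symmetric, the cross term drops out, $\mathbf 1^{\top}M\mathbf 1 = \lambda_1 n$, and $z^{\top}Mz \le \lambda_2\lVert z\rVert^2$ because $z$ lies in the span of the eigenvectors for eigenvalues $\le \lambda_2$. Hence
\[ \chi_T^{\top}M\chi_T \;\le\; \frac{t^2}{n^2}\,\lambda_1 n + \lambda_2\Bigl(t - \frac{t^2}{n}\Bigr) \;=\; t\Bigl(\lambda_2 + \frac{(\lambda_1 - \lambda_2)\,t}{n}\Bigr). \]

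To conclude, apply the Cauchy--Schwarz inequality to the~$u$ nonzero terms $\delta_j$ with $w_j \in N(T)$:
\[ (ct)^2 \;=\; \Bigl(\sum_{w_j \in N(T)} \delta_j\Bigr)^2 \;\le\; u\sum_{w_j \in N(T)} \delta_j^2 \;\le\; u\,t\Bigl(\lambda_2 + \frac{(\lambda_1 - \lambda_2)\,t}{n}\Bigr), \]
and dividing by $t\bigl((\lambda_1-\lambda_2)t/n + \lambda_2\bigr)$ yields the stated bound $u \ge c^2 t \big/ \bigl((\lambda_1-\lambda_2)t/n + \lambda_2\bigr)$. I expect no genuine obstacle here; the delicate points are the Perron--Frobenius step that identifies~$\mathbf 1$ as the top eigenvector and makes~$\lambda_1$ simple (this is where connectedness is used), and keeping the Cauchy--Schwarz count restricted to exactly the $u = \lvert N(T)\rvert$ check nodes with $\delta_j > 0$, since summing over all~$r$ of them would render the inequality vacuous.
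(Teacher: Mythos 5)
Your proof is correct and follows exactly the route the paper indicates for this cited result of Tanner: a spectral decomposition of the quadratic form $\chi_T^{\top}(H^{\top}H)\chi_T$ along the all-ones Perron eigenvector (with $\lambda_1 = cd$ by connectedness and regularity), combined with Cauchy--Schwarz over the $u$ check nodes actually meeting $T$. The paper gives no proof of its own beyond noting the reliance on the spectral theorem for $H^{\top}H$, and your argument is complete, with the two delicate points (simplicity of $\lambda_1$ and restricting the Cauchy--Schwarz sum to $N(T)$) handled correctly.
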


The proofs of these results depend crucially on the spectral theorem
for the symmetric matrix $H^T H$.

\begin{exa} Consider the projective plane of order~$q$ on $n =
  q^2 + q + 1$ points and the corresponding incidence matrix~$H$.
  We have $c = d = q + 1$ and the eigenvalues are $\lambda_1 =
  (q + 1)^2$ and $\lambda_2 = q$.  In this case the eigenvalue
  bound is \[ w \ge n \!\cdot\! \frac {q + 2} {q^2 + q + 1} =
    q + 2 \,, \] and the number of neighbors of~$t$ variable nodes is
  \[ u \ge \frac {c^2 t} {t + \lambda_2} = \frac {(q + 1)^2 t}
    {t + q} \,. \] We note that $u > \frac c 2 t$ if and only if
  $\frac {q + 1} {t + q} > \frac 1 2$, so we may apply
  Thm.~\ref{mindist} with $d = q + 1$. \end{exa}

\section{Decoding Failure}

In order to examine more precisely when the (step-by-step)
bit-flipping algorithm succeeds or fails, we study the correction
of~$t$ errors for small~$t$.

\subsection{Two Errors}

We consider $t = 2$ errors corresponding to blocks $B_1, B_2$ of
size~$c$ and syndrome $S = B_1 \,\triangle\, B_2$.  If the bit-flipping
algorithm chooses a right block to flip, say~$B_1$, the new syndrome
is $S' = B_2$ and the decoding succeeds.  On the other hand, if it
selects some other block~$B_3$ then the new syndrome $S' = B_1
\,\triangle\, B_2 \,\triangle\, B_3$ is of size at most~$c$, the
syndrome size in the former case.  Then the decoder necessarily
fails, since either $S'$ is some block different from $B_1, B_2$,
or the syndrome weight decreases further.

Denote by $s_{12}, s_{13}, s_{23}$ the intersection
sizes of $B_1 \cap B_2$, $B_1 \cap B_3$, $B_2 \cap B_3$ and by
$s_{123}$ the size of $B_1 \cap B_2 \cap B_3$.  Then the blocks~$B_1$
and~$B_2$ intersect~$S$ in $c - s_{12}$ points, while the other
block~$B_3$ intersects~$S$ in $s_{13} + s_{23} - 2 s_{123}$ points.
Therefore, the bit-flipping algorithm chooses a right block and thus
succeeds, precisely if $c > s_{12} + s_{13} + s_{23} - 2 s_{123}$.

\begin{exa} If the intersection of two blocks has at most~$s$ points,
  then the algorithm corrects two errors provided that $c > 3 s$.  So
  in the case of a partial geometry we require $c > 3$.  In the
  borderline case $c = 3$, bit-flipping corrects two errors if and
  only if the partial geometry does not have a \emph{triangle}, i.e.\
  three lines with each two intersecting in distinct points. \end{exa}

\subsection{Three Errors}

As the number~$t$ of errors increases the possible block
configurations become more involved.  In the following we discuss the
case of a partial geometry.  From Prop.~\ref{partial} we know that the
bit-flipping algorithm corrects $t = 3$ errors if $c > 5$.

In the case $c = 5$ consider blocks $B_1, B_2, B_3$ that form a
triangle.  Then the syndrome set $S = B_1 \,\triangle\, B_2
\,\triangle\, B_3$ has~$9$ points and intersects each block~$B_i$
in~$3$ points.  Suppose there is some other block~$B_4$ intersecting
each of $B_1, B_2, B_3$, then the algorithm may choose that block to
flip in which case it fails.  Indeed the new syndrome
$S' = B_1 \,\triangle\, B_2 \,\triangle\, B_3 \,\triangle\, B_4$ is of
size~$8$ yet not of a form $B_1 \,\triangle\, B_2$ as required for the
second-to-last step of a correct decoding. It is not hard to see that
bit-flipping for $c = 5$ corrects three errors precisely if there is
no configuration of four lines with each two intersecting in distinct
points.

\subsection{More Errors}

Consider a partial geometry in which not every pair of lines
intersects.  Then with growing~$t$ it becomes less likely that
given~$t$ blocks intersect pairwise.  Therefore a union of~$t$ blocks
often has more than $c t - {t \choose 2}$ elements, so the
corresponding codes have better expansion properties.

Also it occurs that even if a non-erroneous block has been selected,
the decoder may still decode successfully.

\begin{exa} Let the Tanner graph be left-regular of degree $c = 5$.
  Suppose that four errors occur corresponding to blocks $B_1, B_2,
  B_3, B_4$, intersecting pairwise except for $B_1, B_3$ and $B_2, B_4$.
  Then the syndrome set $S = B_1 \,\triangle\, B_2 \,\triangle\, B_3
  \,\triangle\, B_4$ is of size~$12$.  If there is another block~$B_5$
  intersecting~$S$ in three points, a run of the bit-flipping
  algorithm may select this block yet still decode successfully, see
  Fig.~\ref{bitflip}. \end{exa}

\begin{figure}
  \caption{Example run of the bit-flipping algorithm.  The white and
    gray disks are the satisfied and unsatisfied parity-checks,
    respectively.  The squares represent a block of parity-checks
    that is being affected by a bit-flip, while the given number of
    total unsatisfied parity-checks is decreasing.}\label{bitflip} \bigskip
  \begin{center} \begin{tikzpicture}[scale=.45]
    \cir{0, 4} \fil{1, 4} \fil{2, 4} \fil{3, 4} \cir{4, 4}
    \fil{0, 3} \fil{4, 3} \fil{0, 2} \fil{4, 2} \fil{0, 1} \fil{4, 1}
    \cir{0, 0} \fil{1, 0} \fil{2, 0} \fil{3, 0} \cir{4, 0}
    \cir{1.5, 2} \cir{2.5, 2}
    \rec{2}{4} \rec{0}{2} \rec{4}{2} \rec{1.5}{2} \rec{2.5}{2}
    \node at (2, -1.3) { \fns $12$ };
  \end{tikzpicture} \qquad
  \begin{tikzpicture}[scale=.45]
    \cir{0, 4} \fil{1, 4} \cir{2, 4} \fil{3, 4} \cir{4, 4}
    \fil{0, 3} \fil{4, 3} \cir{0, 2} \cir{4, 2} \fil{0, 1} \fil{4, 1}
    \cir{0, 0} \fil{1, 0} \fil{2, 0} \fil{3, 0} \cir{4, 0}
    \fil{1.5, 2} \fil{2.5, 2}
    \rec{0}{0} \rec{1}{0} \rec{2}{0} \rec{3}{0} \rec{4}{0}
    \node at (2, -1.3) { \fns $11$ };
  \end{tikzpicture} \qquad
  \begin{tikzpicture}[scale=.45]
    \cir{0, 4} \fil{1, 4} \cir{2, 4} \fil{3, 4} \cir{4, 4}
    \fil{0, 3} \fil{4, 3} \cir{0, 2} \cir{4, 2} \fil{0, 1} \fil{4, 1}
    \fil{0, 0} \cir{1, 0} \cir{2, 0} \cir{3, 0} \fil{4, 0}
    \fil{1.5, 2} \fil{2.5, 2}
    \rec{0}{0} \rec{0}{1} \rec{0}{2} \rec{0}{3} \rec{0}{4}
    \node at (2, -1.3) { \fns $10$ };
  \end{tikzpicture} \bigskip\\
  \begin{tikzpicture}[scale=.45]
    \fil{0, 4} \fil{1, 4} \cir{2, 4} \fil{3, 4} \cir{4, 4}
    \cir{0, 3} \fil{4, 3} \fil{0, 2} \cir{4, 2} \cir{0, 1} \fil{4, 1}
    \cir{0, 0} \cir{1, 0} \cir{2, 0} \cir{3, 0} \fil{4, 0}
    \fil{1.5, 2} \fil{2.5, 2}
    \rec{4}{0} \rec{4}{1} \rec{4}{2} \rec{4}{3} \rec{4}{4}
    \node at (2, -1.3) { \fns $9$ };
  \end{tikzpicture} \qquad
  \begin{tikzpicture}[scale=.45]
    \fil{0, 4} \fil{1, 4} \cir{2, 4} \fil{3, 4} \fil{4, 4}
    \cir{0, 3} \cir{4, 3} \fil{0, 2} \fil{4, 2} \cir{0, 1} \cir{4, 1}
    \cir{0, 0} \cir{1, 0} \cir{2, 0} \cir{3, 0} \cir{4, 0}
    \fil{1.5, 2} \fil{2.5, 2}
    \rec{0}{4} \rec{1}{4} \rec{2}{4} \rec{3}{4} \rec{4}{4}
    \node at (2, -1.3) { \fns $8$ };
  \end{tikzpicture} \qquad
  \begin{tikzpicture}[scale=.45]
    \cir{0, 4} \cir{1, 4} \fil{2, 4} \cir{3, 4} \cir{4, 4}
    \cir{0, 3} \cir{4, 3} \fil{0, 2} \fil{4, 2} \cir{0, 1} \cir{4, 1}
    \cir{0, 0} \cir{1, 0} \cir{2, 0} \cir{3, 0} \cir{4, 0}
    \fil{1.5, 2} \fil{2.5, 2}
    \rec{2}{4} \rec{0}{2} \rec{4}{2} \rec{1.5}{2} \rec{2.5}{2}
    \node at (2, -1.3) { \fns $5$ };
  \end{tikzpicture} \end{center}
\end{figure}
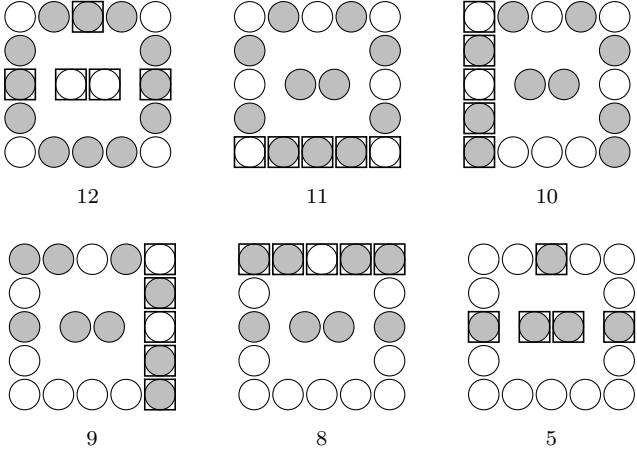

\section*{Conclusion}

We have initiated a study of the bit-flipping redundancy and developed
some first results, which may assist in the analysis of the decoding
failure rate for finite-length codes.  While the concept is similar to
the stopping set or the pseudocodeword redundancy, there seems to be
no obvious general connection, \newpage\noindent say in terms of upper
or lower bounds.  An open problem remains to discuss examples of codes
with infinite pseudoredundancy.  Also it would be interesting to
develop a redundancy for enhanced bit-flipping algorithms such as the
black-gray decoder, see e.g.\ \cite{DGK20, Vas21}.

\end{document}